\newcommand{\indep}{\perp \!\!\! \perp}
\newcommand{\Var}{\mbox{Var}}
\newcommand{\cov}{\mbox{Cov}}
\newcommand{\Cov}{\mbox{Cov}}
\newcommand{\cor}{\mbox{Cor}}
\newcommand{\Cor}{\mbox{Cor}}
\newcommand{\dCov}{\mbox{dCov}}
\newcommand{\dCor}{\mbox{dCor}}
\newcommand{\E}{\mathbb{E}}
\newcommand{\CD}{{\boldsymbol{\cdot}}}
\newcommand{\p}{p}
\newcommand{\rp}{\tau}
\newtheoremstyle{break}
  {\topsep}{\topsep}
  {\itshape}{}
  {\bfseries}{}
\theoremstyle{break}
\newtheorem{lemma}{Lemma}
\newtheorem{proposition}{Proposition}
\definecolor{blue}{RGB}{0,0,255}
\definecolor{red}{RGB}{255,0,0}
\newcommand{\blind}{0}
\begin{document}

\def\spacingset#1{\renewcommand{\baselinestretch}%
{#1}\small\normalsize} \spacingset{1}


\if0\blind
{
  \title{\bf Distance Covariance, Independence,\\ and 
       Pairwise Differences\vspace{3mm} }       
  \author{Jakob Raymaekers\\
    Department of Mathematics, University of Antwerp, 
    Belgium\\ \\
    Peter J. Rousseeuw\\
    Section on Statistics and Data Science, KU Leuven, Belgium\\
    \\ June 18, 2024 \\ \\ \\
    (To appear in {\it The American Statistician}.)\\}  
  \date{} 
  \maketitle
} \fi

\if1\blind
{
  \phantom{abc}\\
  \bigskip
  \bigskip
  \begin{center}
    {\LARGE\bf Distance Covariance, 
    Independence,\\ \vskip2mm and 
       Pairwise Differences\\}
  \bigskip \bigskip
  \large{\today}\\
  \bigskip
  \end{center}
} \fi

\begin{abstract}
Distance covariance \citep{Szekely2007} is a 
fascinating recent notion, which is popular as a test 
for dependence of any type between random variables $X$
and $Y$. This approach deserves to be touched upon in
modern courses on mathematical statistics. It makes
use of distances of the type $|X-X'|$ and $|Y-Y'|$,
where $(X',Y')$ is an independent copy of $(X,Y)$. 
This raises natural questions about independence 
of variables like $X-X'$ and $Y-Y'$, about the
connection between $\cov(|X-X'|,|Y-Y'|)$ and the
covariance between doubly centered distances, and
about necessary and sufficient conditions for
independence. We show some basic results and 
present a new and nontechnical counterexample to 
a common fallacy, which provides more insight. 
We also show some motivating examples involving 
bivariate distributions and contingency tables,
which can be used as didactic material for 
introducing distance correlation.
\end{abstract}

\noindent {\it Keywords:} 
Bivariate distributions;
Correlation;
Doubly centered distances; 
Independent random variables.

\newpage
\spacingset{1.45} 

\section{Introduction} \label{sec:intro}

Independence of random variables is an
important and nontrivial topic in probability
and statistics. There are many subtleties 
concerning independence and correlation, see 
e.g. \cite{Muk2022}, \cite{Rodgers1988}, and
\cite{Rousseeuw1994}.
It is often emphasized in class that two
real random variables $X$ and $Y$ having zero
covariance does not imply their independence.
The recent work of \cite{Szekely2007} provided
a surprising contrast, since the distance
covariance they introduced {\it does}
characterize independence. In our opinion this
topic would be a valuable addition to a
graduate course on mathematical statistics,
because distance covariance is a general method
with interesting properties and wide ranging
applications, for instance in variable 
selection \citep{Chen2018}, sparse contingency
tables \citep{Zhang2019}, independent 
component analysis \citep{Matteson2017}, and 
time series \citep{Davis2018}.
It can be computed fast 
\citep{Huo2016,Chaudhuri2019}, 
and there are interesting connections with other
dependence measures \citep{Edelmann2022}. 
Its robustness to outliers was studied recently
\citep{Leyder2024}.

The formulation of the distance covariance,
described in Section~\ref{sec:counterexample},
is very simple but contains some subtleties that
often give rise to misunderstandings.
It is based on pairwise differences $X-X'$ and 
$Y-Y'$, where $(X',Y')$ is an independent copy 
of $(X,Y)$. In order to provide a context 
for the role of these pairwise differences, we
establish some connections between independence 
of $X$ and $Y$ and independence relations 
involving $X-X'$ and $Y-Y'$. We have not found 
these results in the literature, and we believe 
they could provide a pedagogic background.

We also construct an elementary counterexample to
a common misunderstanding, with the aim of
clarifying why the distance covariance approach
requires ``double centering'' of the interpoint
distances $|X-X'|$ and $|Y-Y'|$.

Most of the material in this paper is accessible
to students who took an introductory course in
probability and statistics. Only the statements 
of Proposition~\ref{prop:XminX'indepY}(b) and 
Proposition~\ref{prop:XminX'indepYminY'}(c) and
the proofs in the Appendix require knowledge
of characteristic functions, but this is not
needed to follow the examples.

\section{Some results on pairwise differences} 
\label{sec:results}

Let us denote independence of a pair of real 
random variables as $X \indep Y$.
We start by looking at pairwise differences of
only one of the variables, say $X$. We consider 
an independent copy $X'$ of $X$, that is,
$X' \sim X$ and $X' \indep (X,Y)$. Then the 
following implications hold.

\begin{proposition} \label{prop:XminX'indepY}
For a pair of random variables $(X,Y)$ 
it holds that\\
{\it (a)} $X \indep Y$ implies $(X-X') \indep Y$\,.\\ 
{\it (b)} If the characteristic function of $X$ 
has no roots or only isolated roots, or the\\ 
\phantom{{\it (b)}} characteristic function of 
 $(X,Y)$ is analytic, then
 $(X-X') \indep Y$ implies $X \indep Y$\,. 
\end{proposition}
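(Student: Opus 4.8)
The plan is to work throughout with characteristic functions, as the statement in part (b) already anticipates. Write $\varphi_X$, $\varphi_Y$, and $\varphi_{(X,Y)}$ for the characteristic functions of $X$, $Y$, and $(X,Y)$. The common starting point for both parts is the identity obtained from $X'\indep(X,Y)$ (which holds by construction, irrespective of whether $X\indep Y$):
\[
\E\bigl[e^{\iu s(X-X')+\iu t Y}\bigr]=\E\bigl[e^{\iu s X+\iu t Y}\bigr]\,\E\bigl[e^{-\iu s X'}\bigr]=\varphi_{(X,Y)}(s,t)\,\overline{\varphi_X(s)},
\]
using that $\varphi_{X'}(-s)=\varphi_X(-s)=\overline{\varphi_X(s)}$ for real $X$. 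Also note $\varphi_{X-X'}(s)=\varphi_X(s)\overline{\varphi_X(s)}=|\varphi_X(s)|^2$.

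For part (a), assume $X\indep Y$, so the first factor above equals $\varphi_X(s)\varphi_Y(t)$ and the left-hand side becomes $|\varphi_X(s)|^2\varphi_Y(t)=\varphi_{X-X'}(s)\,\varphi_Y(t)$. This is precisely the product of the marginal characteristic functions of $X-X'$ and of $Y$, so the uniqueness theorem for characteristic functions yields $(X-X')\indep Y$. (For a reader who prefers to avoid characteristic functions here, the same conclusion follows directly by conditioning on $X'$ and testing against product functions $f(X-X')g(Y)$, invoking $X'\indep(X,Y)$ and then $X\indep Y$.)

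For part (b), I would instead read the hypothesis $(X-X')\indep Y$ into the right-hand side of the displayed identity, which gives
\[
\varphi_{(X,Y)}(s,t)\,\overline{\varphi_X(s)}=\varphi_{X-X'}(s)\,\varphi_Y(t)=\varphi_X(s)\,\overline{\varphi_X(s)}\,\varphi_Y(t).
\]
On the set $N^{c}:=\{s\in\mathbb{R}:\varphi_X(s)\neq 0\}$ one may cancel the factor $\overline{\varphi_X(s)}$ to obtain $\varphi_{(X,Y)}(s,t)=\varphi_X(s)\varphi_Y(t)$ for all $t$ and all $s\in N^{c}$. The whole difficulty — and the reason a hypothesis is needed at all — is to propagate this factorization across the zero set $N:=\{s:\varphi_X(s)=0\}$ of $\varphi_X$; this is the main obstacle, and the general implication is false without controlling $N$.

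To finish, I would dispatch $N$ under each assumption. If $\varphi_X$ has no roots then $N=\emptyset$ and we are done. If the roots of $\varphi_X$ are isolated then $N$ is discrete, so $N^{c}$ is dense in $\mathbb{R}$; for each fixed $t$, the maps $s\mapsto\varphi_{(X,Y)}(s,t)$ and $s\mapsto\varphi_X(s)\varphi_Y(t)$ are continuous and agree on the dense set $N^{c}$, hence agree everywhere. If instead the characteristic function of $(X,Y)$ is analytic, then $\varphi_X(s)=\varphi_{(X,Y)}(s,0)$ and $\varphi_Y(t)=\varphi_{(X,Y)}(0,t)$ are analytic, and since $\varphi_X(0)=1$ the function $\varphi_X$ is not identically zero, so its zeros are isolated and the previous case applies (equivalently, the two analytic functions $\varphi_{(X,Y)}(s,t)$ and $\varphi_X(s)\varphi_Y(t)$ coincide on a neighborhood of the origin, hence on all of $\mathbb{R}^2$ by the identity theorem). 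In every case $\varphi_{(X,Y)}=\varphi_X\varphi_Y$ on all of $\mathbb{R}^2$, and uniqueness of characteristic functions gives $X\indep Y$.
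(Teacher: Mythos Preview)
Your proof is correct and follows essentially the same route as the paper: both derive the key identity $\phi_{(X-X',Y)}(s,t)=\phi_{(X,Y)}(s,t)\,\overline{\phi_X(s)}$ from $X'\indep(X,Y)$, use it directly for (a), and for (b) cancel $\overline{\phi_X(s)}$ off the zero set and then extend by continuity or analyticity. The only cosmetic difference is that the paper first packages the identity into an if-and-only-if lemma, and in the analytic case argues via a small disk plus analytic continuation rather than your (slightly slicker) observation that analyticity of $\phi_{(X,Y)}$ forces the zeros of $\phi_X$ to be isolated, reducing to the previous case.
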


The proof can be found in the Appendix.
Part (a) is general, as it does not require 
any conditions on $X$ or $Y$, such as the 
existence of certain moments. Part (b) is
a bit more involved. 
We have been unable to find this proposition 
in the literature, but since part (a) is 
straightforward we expect that it is known.

The conditions on the characteristic functions
in part (b) of 
Proposition~\ref{prop:XminX'indepY} 
look quite stringent, but there are many relevant 
cases. The characteristic functions of the 
Gaussian, Student, exponential, Poisson, 
chi-square, Gamma, Laplace, logistic, Cauchy, 
and stable distributions have no roots. 
Distributions whose characteristic functions have
non-isolated zeroes are unusual, but some examples
do exist, see e.g. \citet{Ushakov1999}, page 265.
The alternative condition that $\phi_{(X,Y)}$ is 
analytic is satisfied whenever $X$ and $Y$ are 
bounded, see e.g. \citet{Berezin2016}, page 147.

Next we consider pairwise differences of both
$X$ and $Y$. For this we take an independent 
copy $(X',Y')$ of $(X,Y)$, that is,  
$(X',Y') \sim (X,Y)$ and $(X',Y') \indep (X,Y)$.

\begin{proposition} \label{prop:XminX'indepYminY'}
For a pair of random variables $(X,Y)$ it 
holds that\\
{\it (a)} $X \indep Y$ implies 
   $(X-X') \indep (Y-Y')$\,.\\ 
{\it (b)} $(X-X') \indep (Y-Y')$ and 
   $(X-X') \indep Y$ together imply $X \indep Y$.\\
{\it (c)} If $(X,Y)$ is symmetric and its 
  characteristic function has no roots or 
  is analytic,\\
\phantom{{\it (c)}} $(X-X') \indep (Y-Y')$ 
  implies $X \indep Y$.
\end{proposition}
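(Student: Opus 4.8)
The plan is to reduce everything to an identity between characteristic functions. Write $\phi(s,t) = \E\!\left[e^{\iu(sX+tY)}\right]$ for the joint characteristic function of $(X,Y)$, so that $\phi(s,0)$ and $\phi(0,t)$ are the characteristic functions of $X$ and $Y$. Since $(X',Y')$ is an independent copy of $(X,Y)$, the same short computation used for part (a) of Proposition~\ref{prop:XminX'indepYminY'} shows that the joint characteristic function of $(X-X',\,Y-Y')$ equals $\phi(s,t)\,\overline{\phi(s,t)} = |\phi(s,t)|^2$, while $X-X'$ has characteristic function $|\phi(s,0)|^2$ and $Y-Y'$ has characteristic function $|\phi(0,t)|^2$. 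By uniqueness of characteristic functions, the hypothesis $(X-X')\indep(Y-Y')$ is therefore exactly the statement
\[
|\phi(s,t)|^2 \;=\; |\phi(s,0)|^2\,|\phi(0,t)|^2 \qquad\text{for all } s,t\in\mathbb{R},
\]
whereas the conclusion $X\indep Y$ is exactly $\phi(s,t)=\phi(s,0)\phi(0,t)$ for all $s,t$. So the task is to pass from the first identity to the second.

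This is where symmetry enters. If $(X,Y)\sim(-X,-Y)$ then $\phi(s,t)=\overline{\phi(s,t)}$, so $\phi$ is real-valued (and even), and likewise $\phi(s,0)$ and $\phi(0,t)$ are real. Hence the displayed identity becomes $\phi(s,t)^2 = \bigl(\phi(s,0)\phi(0,t)\bigr)^2$, i.e. $A(s,t)\,B(s,t)=0$ for all $(s,t)$, where $A(s,t):=\phi(s,t)-\phi(s,0)\phi(0,t)$ and $B(s,t):=\phi(s,t)+\phi(s,0)\phi(0,t)$. The remaining and essential point — the reason a regularity condition on $\phi$ is needed at all — is to show that it is $A$, and not $B$, that vanishes identically; once $A\equiv 0$ we are done by the factorization criterion for independence.

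When $\phi$ has no roots, $\phi(s,0)$ and $\phi(0,t)$ are nowhere zero as well, so $r(s,t):=\phi(s,t)/\bigl(\phi(s,0)\phi(0,t)\bigr)$ is a continuous function on the connected set $\mathbb{R}^2$ with values in $\{-1,+1\}$; since $r(0,0)=1$, we get $r\equiv 1$, i.e. $A\equiv 0$. When $\phi$ is analytic, $A$ and $B$ are real-analytic on $\mathbb{R}^2$ (being built from $\phi$ and its restrictions $\phi(\cdot,0)$ and $\phi(0,\cdot)$), and their product vanishes identically; since $\{A=0\}$ and $\{B=0\}$ are closed sets whose union is $\mathbb{R}^2$, one of them has nonempty interior by the Baire category theorem, and then the identity theorem for real-analytic functions on the connected set $\mathbb{R}^2$ forces that one to be identically zero. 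It cannot be $B$, because $B(0,0)=2$; hence again $A\equiv 0$.

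I expect the genuine obstacle to be precisely this last step, the elimination of the spurious minus sign: everything before it is routine characteristic-function bookkeeping, but the sign cannot be ruled out without some extra structure, and indeed the counterexample of Section~\ref{sec:counterexample} should illustrate that $|\phi(s,t)|^2 = |\phi(s,0)|^2|\phi(0,t)|^2$ by itself does not force independence. One should also verify the harmless supporting facts: that $X-X'$, $Y-Y'$ and $(X-X',Y-Y')$ are automatically symmetric, so that symmetry of $(X,Y)$ is a real extra hypothesis rather than a consequence; that an analytic characteristic function genuinely yields real-analyticity of $A$ and $B$ on a connected domain containing $\mathbb{R}^2$; and that connectedness of $\mathbb{R}^2$ is what turns the locally constant sign in the no-roots case into a globally constant one.
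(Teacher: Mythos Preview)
Your treatment of part~(c) is correct and essentially the paper's: both reduce via symmetry to $\phi^2=(\phi_X\phi_Y)^2$ with $\phi$ real, and both use connectedness from $\phi(0,0)=1$ to resolve the sign in the no-roots case. For the analytic case the paper is more direct than your Baire argument: it simply observes that on a disk about the origin $\phi$, $\phi_X$, $\phi_Y$ are all strictly positive, takes positive square roots there to get $\phi=\phi_X\phi_Y$ on that disk, and finishes by analytic continuation --- no Baire category or zero-set identity theorem needed. Your route works, but it is heavier machinery for the same conclusion.

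The genuine gap is that you never prove part~(b). The paper's argument combines your modulus identity $|\phi(s,t)|=|\phi(s,0)|\,|\phi(0,t)|$ with Lemma~\ref{lemma:XminX'indepY}, which translates the second hypothesis $(X-X')\indep Y$ into $\phi(s,t)=\phi(s,0)\phi(0,t)$ at every $s$ with $\phi(s,0)\neq 0$; the modulus identity then forces $\phi(s,t)=0=\phi(s,0)\phi(0,t)$ at the remaining points, so the factorization holds everywhere. Your proposal never invokes the hypothesis $(X-X')\indep Y$ or the Lemma~\ref{lemma:XminX'indepY} characterization, so part~(b) is simply missing. One further correction to your last paragraph: Example~1 in Section~\ref{sec:counterexample} does \emph{not} show that $|\phi|^2=|\phi_X|^2|\phi_Y|^2$ can hold without independence --- it only shows that $\cov(|X-X'|,|Y-Y'|)=0$ can, a much weaker condition. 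The paper remarks that a counterexample to the converse of~(a) exists but is nontrivial (Sz\'ekely, personal communication); Example~1 is not it.
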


We could not find these results
in the literature either, and in our opinion
they could provide a useful background when 
the notion of distance covariance is taught.
Also, parts (a) of Propositions 
\ref{prop:XminX'indepY} and 
\ref{prop:XminX'indepYminY'}
could be used as exercises in a
chapter on characteristic functions.
Together with the partial converses in these
propositions they would make a viable 
homework assignment, as long as the exact 
statements of the propositions are provided,
and perhaps also those of the lemmas in the 
Appendix.

It is worth noting that the converse of part
(a) of Proposition~\ref{prop:XminX'indepYminY'}
does not hold without further conditions, 
because there exists a nontrivial counterexample 
(Gabor Sz\'ekely 2024, personal communication).
Therefore also the converse of 
Proposition~\ref{prop:XminX'indepY}(a) cannot 
hold without further conditions, or else we 
could prove the converse of 
Proposition~\ref{prop:XminX'indepYminY'}(a) by 
applying the converse of 
Proposition~\ref{prop:XminX'indepY}(a) twice.

\section{Connection with distance covariance}
\label{sec:counterexample}

If $X \indep Y$ we obtain from 
Proposition~\ref{prop:XminX'indepYminY'}(a) that 
$(X-X') \indep (Y-Y')$. But then it follows that 
also\linebreak
\mbox{$|X-X'| \indep |Y-Y'|$}, since the absolute 
value is a continuous function. 
If $X$ and $Y$ have second moments, that is, $\E[X^2]$
and $\E[Y^2]$ are finite, also $\E[|X-X'|^2]$ and
$\E[|Y-Y'|^2]$ are finite. Therefore the covariance
of $|X-X'|$ and $|Y-Y'|$ exists as well, and since 
$|X-X'| \indep |Y-Y'|$ we have
\begin{equation} \label{eq:covdist}
   \cov(|X-X'|,|Y-Y'|) = 0.
\end{equation}
Therefore, when the second moments of $X$ and
$Y$ exist, $\cov(|X-X'|,|Y-Y'|) = 0$ is a 
{\it necessary} condition for $X \indep Y$. 
However, it is not a {\it sufficient} condition.
In order to illustrate this, we set out to 
construct a simple counterexample. 

\vspace{4mm}
\noindent {\bf Example 1.} The smallest
example we were able to produce is a probability 
distribution on 4 points in the plane.
Table~\ref{tab:counterexample} lists the coordinates 
of $(X,Y)$, and the 4 points are plotted in the 
top panel of Figure~\ref{fig:4points}. 
Note that $X$ and $Y$ are uncorrelated but not 
independent, since the distribution of $Y|X=x$ 
depends on $x$. The $Y \sim X$ regression line is
horizontal. 
The resulting distribution of $(|X-X'|,|Y-Y'|)$ 
contains 5 points, given in the middle panel 
of Table~\ref{tab:counterexample}
with their probabilities, and plotted in the
bottom left panel of Figure~\ref{fig:4points}.
It is easily verified that
$\cov(|X-X'|,|Y-Y'|)$ is exactly zero.
So this is an example with non-independent $X$ and
$Y$ for which $\cov(|X-X'|,|Y-Y'|)=0.$

\begin{table}[ht]
\centering
\caption{Example 1: $\cov(|X-X'|,|Y-Y'|)=0$
but $X$ and $Y$ are dependent.}
\label{tab:counterexample}
\vspace{-0.2cm}
\begin{tabular}{rrrc}
\hline
 atom  & X  & Y & probability \\    
\hline
 $a$ & -1 & 1 & 1/4 \\ 
 $b$ &  1 & 1 & 1/4 \\ 
 $c$ &  0 & 0.6 & 1/4 \\ 
 $d$ &  0 & -1 & 1/4\\ 
\hline
  atom & $|X-X'|$ & $|Y-Y'|$ & probability \\    
\hline
$(a,a)\,,(b,b)\,,(c,c)\,,(d,d)$ & 0 & 0 & 1/4 \\ 
$(c,d)$ & 0 & 1.6 & 1/8 \\ 
$(a,c)\,,(b,c)$ & 1 & 0.4 & 1/4 \\ 
$(a,d)\,,(b,d)$ & 1 & 2 & 1/4 \\ 
$(a,b)$ & 2 & 0 & 1/8 \\ 
\hline
  atom & $\Delta(X,X')$ & $\Delta(Y,Y')$
  & probability \\    
\hline
$(a,a)\,,(b,b)$ & -1.25 & -0.40 & 1/8 \\ 
$(d,d)$ & -0.25 & -2.00 & 1/16 \\ 
$(c,c)$ & -0.25 & -0.40 & 1/16 \\ 
$(c,d)$ & -0.25 & 0.40 & 1/8 \\ 
$(a,c)\,,(b,c)$ & 0.25 & 0.00 & 1/4 \\ 
$(a,d)\,,(b,d)$ & 0.25 & 0.80 & 1/4 \\ 
$(a,b)$ & 0.75 & -0.40 & 1/8 \\ 
\hline
\end{tabular}
\label{tab:4points}
\end{table}

\begin{figure}[!ht]
\centering
\includegraphics[width=.49\textwidth]
   {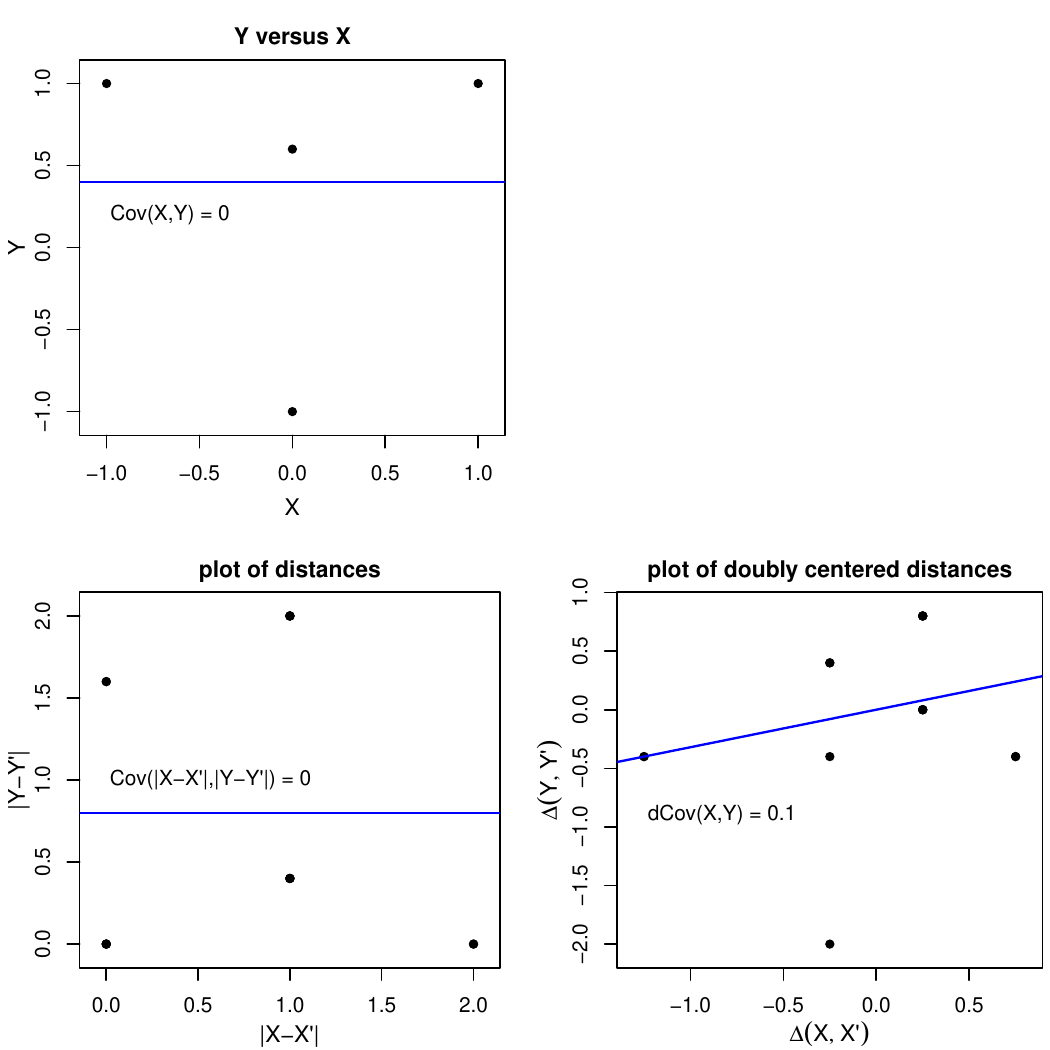}
\includegraphics[width=.98\textwidth]
   {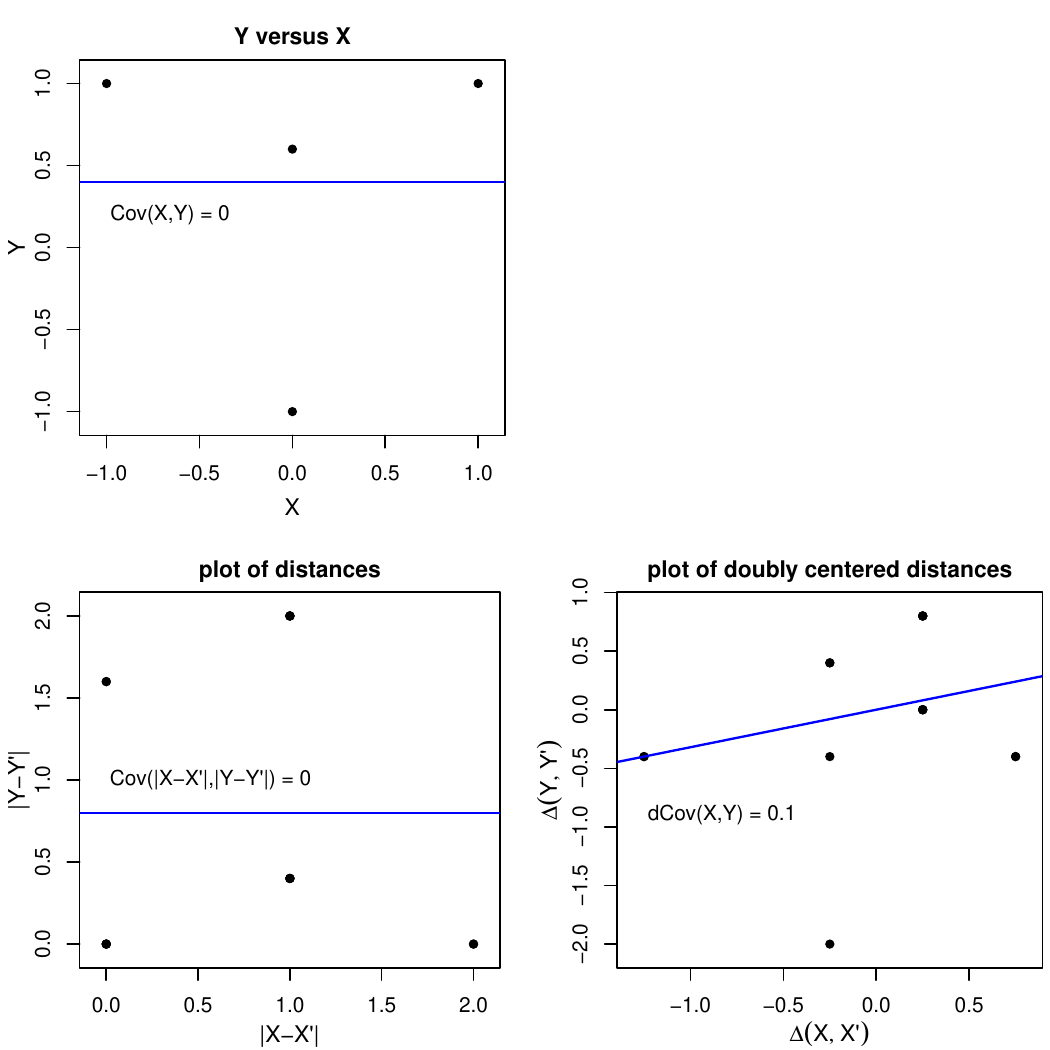}
\vspace{2mm}
\caption{Example with a distribution on 4 points, 
  from Table~\ref{tab:4points}.
  Top: plot of $Y$ versus $X$. 
  Bottom left: plot of pairwise distances $|Y-Y'|$
  of $Y$ versus those of $X$. Bottom right: doubly
  centered distances $\Delta(Y,Y')$ of Y versus 
  those of $X$.}
  \label{fig:4points}
\end{figure}

\vspace{4mm}
\citet{Szekely2007} proposed to use another function. 
Instead of the interpoint distances $|X-X'|$
above, they compute their {\it doubly centered} 
version given by
\begin{align} \label{eq:DC}
 \Delta(X,X') =&\, |X-X'| - \E_{X''}[|X-X''|]
 \nonumber\\
  &\,-\E_{X''}[|X''-X'|]
  + \E_{X'',X'''}\E[|X''-X'''|]
\end{align}
where $X''$ and $X'''$ are also independent 
copies of $X$. For $\Delta(X,X')$ to exist it is 
necessary that  $E[|X|]$ is finite. Note that 
$\Delta(X,X') = \Delta(X',X)$ is not a distance 
itself, since it also takes on negative values.
Moreover, $\E_{X}[\Delta(X,X')]$ is zero, and the 
same holds for $\E_{X'}[\Delta(X,X')]$ and 
$\E_{X,X'}[\Delta(X,X')]$.
This explains the name `doubly centered'. It turns 
out that the {\it second} moments of 
$\Delta(X,X')$ exist as well.

If also $\E[|Y|]$ is finite, \citet{Szekely2007}
compute what they call the {\it distance covariance}
of $X$ and $Y$, given by
\begin{equation} \label{eq:dCov}
  \dCov(X,Y) := \cov(\Delta(X,X'),\Delta(Y,Y')).
\end{equation}
(In fact they took the square root of the right 
hand side, but we prefer not to because the units 
of~\eqref{eq:dCov} are those of $X$ times $Y$.)
They proved the amazing result that when the first 
moments of $X$ and $Y$ exist, it holds that
\begin{equation} \label{eq:dCovzero}
  X \indep Y \Longleftrightarrow \dCov(X,Y)=0.
\end{equation}
This yields a necessary {\it and sufficient} 
condition for independence. The implication 
$\Longleftarrow$ is not obvious at all, and was 
proved by complex analysis. Their 
work also made it clear that always 
$\dCov(X,Y)\geqslant 0$ because they can write 
$\dCov(X,Y)$ as an integral of a nonnegative 
function.

The bottom panel of 
Table~\ref{tab:counterexample} 
lists the coordinates of the  
$(\Delta(X,X'),\Delta(Y,Y'))$ and their 
probabilities, and these points are plotted in the 
bottom right panel of Figure~\ref{fig:4points}.
Note that we now have 7 points instead of 5.
Indeed, the atom 
$\{(a,a)\,,(b,b)\,,(c,c)\,,(d,d)\}$
of $(|X-X'|,|Y-Y'|)$ has split into three atoms
of $(\Delta(X,X'),\Delta(Y,Y'))$. Even though
all four pairs have the same 
$(X-X',Y-Y') = (0,0)$, they can obtain different 
$(\Delta(X,X'),\Delta(Y,Y'))$ because different
means were subtracted from their $|X-X'|=0$ 
and $|Y-Y'|=0$. This implies that the doubly
centered distance $\Delta(X,X')$ cannot be
written as a function of $X-X'$.

In spite of the name `distance covariance', $\dCov$ 
is thus fundamentally different from the covariance 
of distances in~\eqref{eq:covdist}.
As we just saw, $\dCov$ is not a function of the 
pairwise differences $X-X'$ and $Y-Y'$ alone: to
compute $\Delta(x,x')$ we need to know the actual 
values of $x$ and $x'$. 
So the arrow $\Longrightarrow$ 
in~\eqref{eq:dCovzero} is not an immediate 
consequence of $|X-X'| \indep |Y-Y'|$
or even of the fact that $X-X' \indep Y-Y'$\,, 
instead it is truly derived from $X \indep Y$.
(If $\Delta(X,X')$ were a function of $X-X'$
it would follow from $(X-X') \indep (Y-Y')$
that $\dCov(X,Y)=0$ and hence $X \indep Y$,
which we know is not true in general.)

In the example we obtain exactly 
$\dCov(X,Y) = 0.1 > 0$, which confirms the 
dependence of $X$ and $Y$. 
The example thus illustrates that the 
double centering in $\dCov(X,Y)$ is necessary
to characterize independence, since without 
it we obtained $\cov(|X-X'|,|Y-Y'|)=0$
which provided no clue about the dependence 
of $X$ and $Y$.

The regression line
in the bottom right panel of 
Figure~\ref{fig:4points} is not 
horizontal but goes up. Its slope must 
be positive or zero because it is a positive 
multiple of $\dCov(X,Y)$, which we know is 
always nonnegative.
The regression line also has to pass through
the origin $(0,0)$, because the doubly 
centered distances of X as well as Y have 
zero mean, so the average of the points in 
this plot is the origin.
In this tiny example the regression line also 
happens to pass through one of the points in 
the plot, but that is a coincidence. The line 
does not have to pass through any  
point, as can be verified by e.g. changing 
the first x-coordinate of the original data 
from -1.0 to -1.5\,.

\citet{Szekely2007} also derived a different 
expression for $\dCov$. Working out the 
covariance in~\eqref{eq:dCov} yields
$4 \times 4 = 16$ terms, that exist when $X$ and 
$Y$ also have second moments. With elementary 
manipulations and a lot of patience these terms 
can be reduced to three:
\begin{align*} \label{eq:||}
  \dCov(X, Y)
  =&\, \E[|X-X'||Y-Y'|]+\E[|X-X'|]\E[|Y-Y'|]
  \nonumber \\
  &\, -2\E[|X-X'||Y-Y''|].
\end{align*}
Combining the first term on the right with minus the 
second, and the third with twice the second, 
\cite{SzekelyRizzo2023} obtain
\begin{equation*}
    \dCov(X,Y) = \cov(|X-X'|,|Y-Y'|)
    - 2 \cov(|X-X'|,|Y-Y''|)
\end{equation*}
which connects $\dCov$ with the covariance of
distances in \eqref{eq:covdist}.
Since we have seen that $X \indep Y$ implies that
$\cov(|X-X'|,|Y-Y'|) = 0$, the
only way that $X$ and $Y$ can be independent is
when {\it both} terms on the right hand side are 
zero. In the example $\cov(|X-X'|,|Y-Y'|) = 0$ 
but $X$ and $Y$ are dependent, so the second term 
has to be nonzero, and indeed  
$\cov(|X-X'|,|Y-Y''|) = -0.05$\,.
 
\section{Distance correlation and
         finite samples} \label{sec:finite}

Since the units of $\dCov(X,Y)$ 
are those of $X$ times $Y$, and 
$\dCov(aX,bY) = ab\,\dCov(X,Y)$,
one often uses the unitless {\it distance
correlation} defined as
\begin{equation}\label{dcor}
  \dCor(X,Y) = \frac{\dCov(X,Y)}
  {\sqrt{\dCov(X,X)\dCov(Y,Y)}}\;
\end{equation}
which always lies between 0 and 1.
Note that the conventional definition is
the square root of \eqref{dcor}.

So far we have worked with population
distributions, but $\dCov$ and $\dCor$ can also
be used for finite samples. One can simply apply
them to the empirical distribution of the
sample. In particular, for a univariate
sample $X_n = (x_1,\ldots,x_n)$ we denote
$d_{ij} := |x_i-x_j|$ for $i,j=1,\ldots,n$
as well as 
\begin{equation}
\overline{d_{i \CD}} = 
  \frac{1}{n}\sum_{j=1}^n d_{ij} \qquad
\overline{d_{\CD j}} = 
  \frac{1}{n}\sum_{i=1}^n d_{ij} \qquad
\overline{d_{\CD \CD}} =
  \frac{1}{n^2}\sum_{i,j=1}^n d_{ij}\;\;.
\end{equation}
Double centering yields the values
$$\Delta^{\!X_n}_{ij} := d_{ij} - 
  \overline{d_{i \CD}} - 
  \overline{d_{\CD j}} + 
  \overline{d_{\CD \CD}}$$
so that 
$\sum_{j=1}^n \Delta^{\!X_n}_{ij} = 0$ 
for all $i$ and
$\sum_{i=1}^n \Delta^{\!X_n}_{ij} = 0$ 
for all $j$.
The $\dCov$ of a bivariate sample is then
defined as
\begin{equation}\label{dCovsample}
  \dCov(X_n,Y_n) = \frac{1}{n^2} 
  \sum_{i,j=1}^n \Delta^{\!X_n}_{ij} 
  \Delta^{\!Y_n}_{ij}\;\;.
\end{equation}
The $\dCor$ of a bivariate sample is 
analogous to~\eqref{dcor}.
When based on an i.i.d. sample of size $n$
from a pair of random variables $(X,Y)$ 
with first moments, the finite-sample 
$\dCov(X_n,Y_n)$ converges almost surely to 
$\dCov(X,Y)$ when $n \rightarrow \infty$
\citep{Szekely2007}.

\section{Examples}
\label{sec:examples}

The material in this section and the 
next one can be used as exercises for 
students, in a lab session or a 
homework assignment.

\noindent {\bf Example 2.} The 
distance covariance can be applied to 
contingency tables. 
For instance, $2\times 2$ contingency 
tables can be modeled by Bernoulli
variables $X$ and $Y$, that can only take 
on the values 0 and 1. We denote their 
joint probability as $\p_{ij}=P(X=i,Y=j)$
and the marginal probabilities as
$\p_{i\CD} = \p_{i0}+\p_{i1}$ and
$\p_{\CD j}= \p_{0j}+\p_{1j}$\,.
It can be verified that
\begin{equation}\label{dCov_contig}
  \dCov(X,Y) = \sum_{i=0}^1 \sum_{j=0}^1 
  (\p_{ij} - \p_{i  \CD}\p_{\CD j})^2\,.
\end{equation}
Therefore $\dCov(X,Y)=0$ iff 
$\p_{ij} = \p_{i\CD}\p_{\CD j}$ for all
$i,j = 0,1$, which is equivalent to 
$X \indep Y$.
Note that~\eqref{dCov_contig} is similar 
to Pearson's chi-square statistic, but not 
identical. If we divide the chi-square
statistic by the sample size, and let the
sample size grow, it converges
to the population version 
\begin{equation*}\label{chisquare}
  \sum_{i=0}^1 \sum_{j=0}^1 
  \Big(\frac{\p_{ij} - 
  \p_{i  \CD}\p_{\CD j}}
  {\p_{i  \CD}\p_{\CD j}}\Big)^2
\end{equation*}
which is not equivalent to 
\eqref{dCov_contig}.
It is not too difficult to derive that  
\begin{equation}\label{covD_contig}
  \cov(|X-X'|,|Y-Y'|) = 2\big(\p_{00}\p_{11} +
  \p_{01}\p_{10} - 
  2\p_{0\CD}\p_{1\CD}\p_{\CD 0}\p_{\CD 1}
  \big) \;.
\end{equation}
Now it is easy to see that $X \indep Y$
implies that \eqref{covD_contig} becomes zero. 
But it is not true the other way around. A 
counterexample is given by 
$(\p_{00},\p_{01},\p_{10},\p_{11}) =
(10,5,14,11)/40$.
This zeroes $\cov(|X-X'|,|Y-Y'|)$,
but $X$ and $Y$ are not independent and 
$\dCov(X,Y) = 0.025$ is strictly positive.
(Unlike Example 1 in Table~\ref{tab:4points},
here the plain $\cov(X,Y)$ is not zero.)\\

\noindent {\bf Example 3.} The main 
advantage of $\dCor$ over the usual
product-moment correlation $\Cor$ is that from
$\dCor(X,Y)=0$ it follows that $X \indep Y$.
Most introductory statistics books stress that
this does not hold for $\Cor$. A typical 
illustrative example is to
take a univariate variable $X$ with a 
distribution that has a second moment and is 
symmetric about zero, and to put $Y = X^2$.
(If a bivariate density is desired,
one can add a Gaussian error term to $Y$.)
Let us take the simple case where $X$ follows
the uniform distribution on $[-1,1]$.
Clearly $X$ and $Y$ are dependent, but by
symmetry $\Cor(X,Y)=0$. However, we will see
that $\dCor(X,Y)$ is strictly positive.

The computation of $\dCor(X,Y)$ offers 
an opportunity for carrying out a simple
numerical experiment. 
First we have to generate a sample of size 
$n$ from this bivariate distribution. 
This is easy, for instance in \textsf{R}
we can run 
\texttt{X=runif(n,min=-1,max=1)} 
followed by \verb+Y=X^2+\,. 
This yields the left panel of 
Figure~\ref{fig:examples_3_and_4}, in
which the horizontal regression line
illustrates that the classical 
correlation $\cor(X,Y)$ is zero. To 
compute the sample distance correlation 
we can use the \texttt{R} package 
\texttt{energy} \citep{Rizzo2022} or 
the package \texttt{dccpp} 
\citep{Berrisch2023}. In the first case 
we run \texttt{energy::dcor2d(X,Y)} which 
uses the algorithm of \cite{Huo2016}, 
and in the second case the command is
\verb+dccpp::dcor(X,Y)^2+ 
which carries
out the algorithm of \cite{Chaudhuri2019}.
Both algorithms for $\dCor$ are very fast 
as their computation time is only 
$O(n\,\log(n))$, and they do not store the 
$n \times n$ matrices of all 
$\Delta^{\!X_n}_{ij}$ and 
$\Delta^{\!Y_n}_{ij}$\,.
When we let $n$ grow, the answer quickly
converges to approximately $0.2415 > 0$.
The result stabilizes even faster if
we use an equispaced set
\texttt{X=seq(from=-1,to=1,by=2/(n-1))},
so the computation becomes a crude
numerical integration.\\

\begin{figure}[!ht]
\centering
\includegraphics[width=.98\textwidth]
   {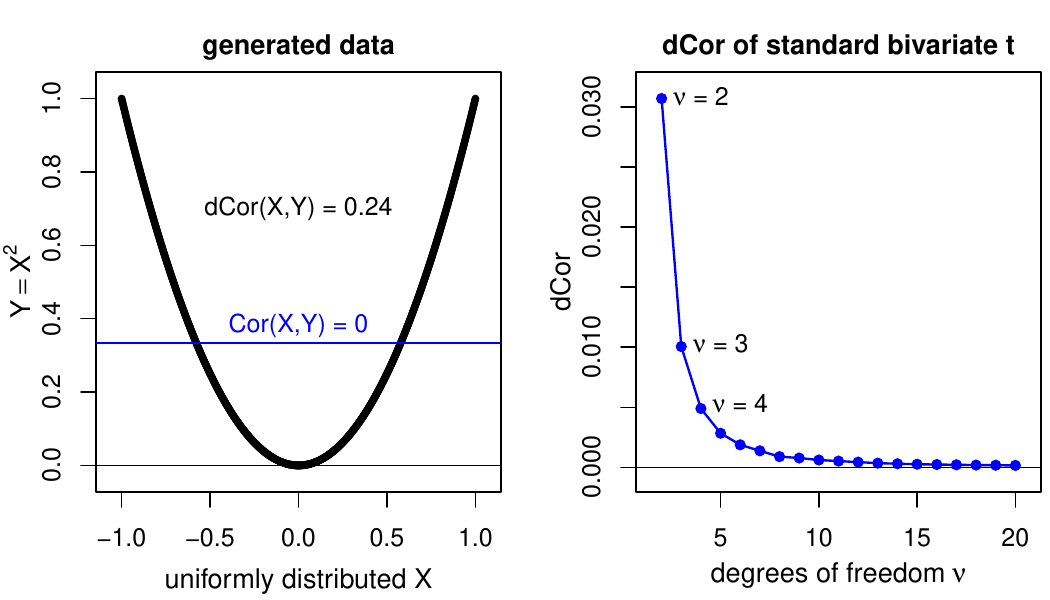}
\vspace{2mm}
\caption{Left: dependent variables 
 generated 
 in Example 3, with horizontal regression 
 line illustrating that $X$ and $Y$ are 
 uncorrelated. Right: Plot of the distance
 correlation of the standard bivariate
 $t$-distribution in Example 4, for a range 
 of $\nu$.}
\label{fig:examples_3_and_4}
\end{figure}

\noindent {\bf Example 4.} In the 
previous example the left panel of
Figure~\ref{fig:examples_3_and_4} 
immediately reveals the dependence, because 
the conditional 
expectation $E[Y|X=x] = x^2$ depends on $x$. 
But there are more subtle situations, where 
for instance the conditional expectation is 
constant but some other moment is not. A nice 
example is the bivariate $t$-distribution.
When its center is $(0,0)$ and its scatter
matrix is the identity matrix, it is called
the standard bivariate $t$-distribution
with density
\begin{equation}\label{bivart}
  f(x,y) = \frac{1}{2\pi} \Big(1 + 
  \frac{x^2+y^2}{\nu}\Big)^{-(\nu+2)/2}
\end{equation}
where $\nu$ is the degrees of freedom
parameter. The marginal distribution of $Y$ is
the usual univariate $t$-distribution with
center $0$ given by
\begin{equation}\label{univart}
  f_t(y\,;s^2,\nu) =
  \frac{c(\nu)}{s}\Big(1+\frac{(y/s)^2}{\nu}
  \Big)^{-(\nu+1)/2}
\end{equation}
where $c(\nu)$ is the constant needed to make
the density integrate to 1, and the scale 
parameter $s$ equals 1 here. In general 
$\Var(Y) = s^2\nu/(\nu-2)$ when $\nu>2$.
A plot of the bivariate density \eqref{bivart} 
looks a lot like that of the standard 
bivariate Gaussian distribution, with circular 
symmetry. When $\nu > 2$ the correlation
$\cor(X,Y)$ exists and is zero.
But whereas $X$ and $Y$ are 
independent in the standard Gaussian 
setting, they are no longer here, since
the bivariate density \eqref{bivart} does 
not equal the product of the marginal
densities of $X$ and $Y$.
The conditional density of $Y$ given 
$X=x$ is now
\begin{equation}\label{conditionalf}
 f(y|X=x) = f_t\Big(y\,;\frac{\nu+x^2}
 {\nu+1}, \nu+1\Big)
\end{equation}
\citep{Ding2016}, so it is again a 
univariate $t$ with center 0, but now 
with $\nu + 1$ degrees of freedom and 
a scale parameter that depends on $x$. 
Due to the 
increased degrees of freedom, the
conditional expectation already exists 
for $\nu > 0$ and equals zero, so it is 
constant. The conditional variance 
exists for any $\nu > 1$ and equals 
$(\nu+x^2)/(\nu-1)$. It is thus lowest 
for $x=0$ and increases with $|x|$.

We now study the distance 
correlation of these dependent but 
uncorrelated variables $X$ and $Y$.
An analytic derivation of $\dCov(X,Y)$
may not be possible, but in \textsf{R}
we can easily generate data from the 
standard bivariate $t$-distribution by
\texttt{rmvt(n,df=df)} where \texttt{df}
is the degrees of freedom $\nu$. The 
function \texttt{rmvt} is in the \texttt{R}
package \texttt{mvtnorm} \citep{Genz2023}.
We can then compute the distance 
correlation in exactly the same way as in
Example 3 above. 
Figure~\ref{fig:examples_3_and_4} shows
the resulting estimates of $\dCor(X,Y)$
obtained for $n=$100 000 and $\nu$ 
ranging from 2 to 20, a computation that 
took under one minute.
The distance correlation goes down to
zero for increasing $\nu$, which is
understandable because for 
$\nu \rightarrow \infty$ the standard
bivariate $t$-distribution converges
to the standard Gaussian distribution,
where $X$ and $Y$ are indeed independent.

\section{Testing for independence}
\label{sec:permtest}

Now suppose we have an i.i.d. 
sample $(X_n,Y_n)$ from a bivariate random
variable $(X,Y)$, and we want to test the
null hypothesis $H_0$ that $X$ and $Y$ are
independent.
If we know that $(X,Y)$ is bivariate
Gaussian, $X \indep Y$ is equivalent to
the true parameter $\Sigma_{12}$ being
zero, where $\Sigma$ is the unknown
covariance matrix of $(X,Y)$. In that
particular situation $H_0$ can be tested 
by computing the sample correlation 
coefficient of $(X_n,Y_n)$ and comparing 
it to its null distribution for that 
sample size.

However, in general we do not 
know whether data come from a Gaussian 
distribution, and the bivariate point 
cloud may have a different shape.
We have illustrated in Examples 1, 3, 
and 4 that dependent variables can be
uncorrelated, so a test of $\Cov(X,Y)$
would not suffice anyway.
What we need is a distribution-free
independence test, meaning that it
works for any distribution of $(X,Y)$.
Since we know that $X \indep Y$ is
characterized by $\dCov(X,Y)=0$, a natural
idea is to compute the test statistic
$\dCov(X_n,Y_n)$ from the sample. 
Larger values of $\dCov(X_n,Y_n)$ provide 
more evidence against $H_0$ than smaller
values, but how can we compute the
$p$-value when we do not know the kind 
of distribution that $(X,Y)$ has?

Since all we have is the dataset 
$(X_n,Y_n)$, this is what we must use.
Whatever the distribution of $(X,Y)$,
a random permutation of $Y_n$ will be
independent of $X_n$\,. More formally,
if we draw a permutation $\rp$ from
the uniform distribution on all $n!$
permutations on $(1,\ldots,n)$, we 
have $Y_n^{\rp} := (y_{\rp(1)},\ldots, 
y_{\rp(n)}) \indep X_n$\,.
If $n$ is very small we can use all
possible permutations $\rp$, and 
otherwise we can draw many of them, say 
$m=1000$ permutations 
$\rp_1,\ldots,\rp_m$\,. We can then
estimate the $p$-value by counting how
often $\dCov(X_n,Y_n^{\rp_m})$ with the
permuted $Y_n^{\rp_m}$ is larger than
the observed $\dCov(X_n,Y_n)$:
\begin{equation*}\label{permtest}
   \hat{p} = \frac{1}{m+1}
   \big(\#\big\{m\,\big|\, 
   \dCov(X_n,Y_n^{\rp_m})
   > \dCov(X_n,Y_n)\big\} + 1\big).
\end{equation*}
The $+1$ stems from the fact that the
original $Y_n$ corresponds to the identical
permutation $(1,\ldots,n)$ and is independent
of $X_n$ under $H_0$\,, and has the
advantage that $\hat{p}$ cannot become 
exactly zero, which would be unrealistic.

The permutation test 
is simple, and it is fast due to the fast 
algorithms for $\dCov$. Note that it would 
make no difference if we would replace 
$\dCov$ by $\dCor$, since the denominator
$(\dCov(X_n,X_n)\dCov(Y_n^{\rp_m},
Y^{\rp_m})_n)^{1/2} = (\dCov(X_n,X_n)
\dCov(Y_n,Y_n))^{1/2}$ of $\dCor$ is 
constant, so it is easiest to stick with 
$\dCov$. Also, it does not matter whether 
we square $\dCov$ or not.
More information about testing independence
can be found in \citep{SzekelyRizzo2023}.
A potential exercise for students would be 
to generate samples from the bivariate 
distributions in Example 3 or 4 of 
Section~\ref{sec:examples} and compute 
$\hat{p}$ for different sample sizes. In that 
setting they can also estimate the power of 
the permutation test for a fixed level, for 
instance by rejecting $H_0$ when 
$\hat{p} < 0.05$, using simulation.\\

\noindent{\bf Acknowledgment.} The authors thank
Gabor Sz\'ekely for a personal communication.\\

\noindent{\bf Software availability.} An
\textsf{R} script that reproduces the examples
in this note is available on
\url{https://wis.kuleuven.be/statdatascience/robust/software}\,.

\section*{Appendix with proofs}
\label{sec:proofs1}

In order to prove 
Proposition~\ref{prop:XminX'indepY}, it turns 
out that the following lemma is very helpful.

\begin{lemma} \label{lemma:XminX'indepY}
If $(X,Y)$ is a pair of random variables and we 
construct an independent copy $X'$ of $X$, that 
is, $X' \sim X$ and $X' \indep (X,Y)$, then
$(X-X') \indep Y$ is equivalent to the condition
\begin{equation} \label{eq:cond1}
  \mbox{for all $t$ with $\phi_{(X,Y)}(t,0) \neq 0:$ }
  \;\; \phi_{(X,Y)}(t,v)= 
    \phi_{(X,Y)}(t,0)\phi_{(X,Y)}(0,v)\;.
\end{equation}
\end{lemma}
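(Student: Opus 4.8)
The plan is to reduce the statement to a pure identity between characteristic functions. Write $\phi = \phi_{(X,Y)}$, and recall the standard fact that two random variables are independent precisely when their joint characteristic function factors into the product of the two marginal characteristic functions. So I first want to express the joint characteristic function of $(X-X',\,Y)$, the characteristic function of $X-X'$, and that of $Y$ in terms of $\phi$, noting that $\phi(s,0)=\phi_X(s)$ and $\phi(0,v)=\phi_Y(v)$. Using $X' \indep (X,Y)$ together with $X' \sim X$, and $\E[e^{-\iu sX'}] = \overline{\phi_X(s)} = \overline{\phi(s,0)}$, I would obtain
\[
 \E\!\left[e^{\iu s(X-X')+\iu vY}\right] = \phi(s,v)\,\overline{\phi(s,0)},\qquad
 \E\!\left[e^{\iu s(X-X')}\right] = |\phi(s,0)|^2,\qquad
 \E\!\left[e^{\iu vY}\right] = \phi(0,v).
\]
Consequently $(X-X')\indep Y$ is equivalent to the requirement that $\phi(s,v)\,\overline{\phi(s,0)} = |\phi(s,0)|^2\,\phi(0,v)$ for every $s$ and $v$; call this identity $(\star)$.

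The second and decisive step is to show that $(\star)$ is equivalent to condition~\eqref{eq:cond1}. In one direction, at any $s$ with $\phi(s,0)=0$ both sides of $(\star)$ vanish identically in $v$, since the left-hand side carries the factor $\overline{\phi(s,0)}$; hence $(\star)$ is automatically satisfied there and imposes nothing. At an $s=t$ with $\phi(t,0)\neq 0$ I would divide $(\star)$ by $\overline{\phi(t,0)}$ to arrive exactly at $\phi(t,v)=\phi(t,0)\,\phi(0,v)$. Conversely, multiplying that identity by $\overline{\phi(t,0)}$ recovers $(\star)$ at $s=t$, and $(\star)$ holds automatically at all remaining values of $s$ by the observation just made. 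This gives the equivalence in both directions and finishes the proof.

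I do not expect a genuine obstacle, as the computations are short. The only subtle point deserving care is the bookkeeping on the zero set of $s\mapsto\phi(s,0)=\phi_X(s)$: one has to notice that $(\star)$ is vacuous exactly where $\phi_X$ vanishes, which is precisely the reason the condition in the lemma is quantified only over those $t$ with $\phi_{(X,Y)}(t,0)\neq 0$ rather than over all $t$. Everything else is the factorization afforded by $X'\indep(X,Y)$ followed by a one-line division.
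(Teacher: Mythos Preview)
Your argument is correct and is essentially the paper's own proof: the paper also reduces the question to the identity $\phi_{(X,Y)}(t,v)\,\phi_X(-t)=\phi_X(t)\,\phi_X(-t)\,\phi_Y(v)$, which is your $(\star)$ written with $\phi_X(-t)$ in place of $\overline{\phi(t,0)}$, and then divides by $\phi_X(-t)$ where it is nonzero. The only cosmetic difference is that you package both directions through the single equivalence $(X-X')\indep Y \Leftrightarrow (\star)$, whereas the paper treats $\Rightarrow$ and $\Leftarrow$ separately.
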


\begin{proof}[Proof of Lemma~\ref{lemma:XminX'indepY}]
For the direction $\Rightarrow$ we compute the 
characteristic functions
\begin{align*}
  &\phi_{(X,-X',Y)}(s,t,v) = 
  \phi_{-X'}(t) \phi_{(X,Y)}(s,v)\\  
  &\phi_{(X-X',Y)}(t,v) = \phi_{X-X'}(t) \phi_{Y}(v)
  = \phi_X(t) \phi_{-X'}(t) \phi_Y(v)\;.
\end{align*}
On the subset $\{s=t\}$ both left hand sides equal
$\E[e^{it(X-X')+ivY}]$ so
\begin{equation}\label{eq:intermediate}
  \phi_{(X,Y)}(t,v) \phi_{X}(-t) = 
  \phi_X(t) \phi_{X}(-t) \phi_{Y}(v)\;.
\end{equation}
Since $\phi_X$ is Hermitian its set of roots is 
symmetric, so we have that 
$\phi_X(t) \neq 0 \Rightarrow \phi_{X}(-t) \neq 0$ 
and in that case $\phi_{X}(-t)$
cancels in~\eqref{eq:intermediate}, 
yielding~\eqref{eq:cond1}.

For the direction $\Leftarrow$ we compute
\begin{align*}
  \phi_{(X-X',Y)}(t,v) &= \E[e^{it(X-X')+ivY}] 
  = \E[e^{itX}e^{-itX'}e^{ivY}]\\
  &= \phi_{-X'}(t) \phi_{(X,Y)}(t,v) \quad
     \mbox{due to } X' \indep (X,Y)\;.
\end{align*}
In this equality we can replace $\phi_{(X,Y)}(t,v)$ by
$\phi_X(t) \phi_Y(v)$ whenever $\phi_X(-t) \neq 0$,
so then 
\begin{equation} \label{eq:factorizes}
  \phi_{(X-X',Y)}(t,v) = \phi_{X-X'}(t)\phi_Y(v)\,.
\end{equation}
But this also holds when $\phi_X(-t) = 0$ because then
$\phi_{-X'}(t) \phi_{(X,Y)}(t,v) = 0 =
\phi_{X}(-t) \phi_X(t) \phi_Y(v)$.
Therefore~\eqref{eq:factorizes} holds unconditionally,
hence $(X-X') \indep Y$.
\end{proof}

\vskip5mm

\begin{proof}[\bf Proof of Proposition 
\ref{prop:XminX'indepY}]
For (a) we use the fact that $X \indep Y$ 
implies $\phi_{(X,Y)}(t,v)=\phi_{X}(t)\phi_{Y}(v)$ 
for any $t$ and $v$, which is stronger than
condition~\eqref{eq:cond1} in
Lemma~\ref{lemma:XminX'indepY}, 
hence $(X-X') \indep Y$.

For (b) we also start from condition~\eqref{eq:cond1} 
in Lemma~\ref{lemma:XminX'indepY}.
If the characteristic function of $X$ has no 
roots we always have $\phi_{X}(-t) \neq 0$ so
\begin{equation} \label{eq:phifactorizes}
  \phi_{(X,Y)}(t,v) = \phi_X(t) \phi_{Y}(v) \quad
  \mbox{ for all } (t,v)
\end{equation}
hence $X \indep Y$.

Suppose that $\phi_X$ does have roots but they are 
isolated, implying that the non-roots form a dense set.
That is, any root $t$ is the limit of a sequence of 
non-roots $t_n$ for $n \rightarrow \infty$.
In each $t_n$ we have 
$\phi_{(X,Y)}(t_n,v)= \phi_{X}(t_n)\phi_{Y}(v)$
by condition~\eqref{eq:cond1}. Since characteristic
functions are absolutely continuous we can pass to the
limit, again yielding~\eqref{eq:phifactorizes}.

If we assume nothing about roots but $\phi_{(X,Y)}$ is 
analytic, so are $\phi_{X}(t) = \phi_{(X,Y)}(t,0)$ 
and $\phi_{Y}(v) = \phi_{(X,Y)}(0,v)$.
All characteristic functions take the value 1 at 
the origin, and are absolutely continuous.
Therefore there is a $\delta > 0$ such that for 
all $(t,v)$ in the disk $B((0,0),\delta)$ it 
holds that $\phi_{(X,Y)}(t,v)$ 
as well as $\phi_{X}(t)$ and $\phi_{Y}(v)$ are
nonzero. On that disk we can thus divide by
$\phi_{X}(-t)$ in~\eqref{eq:intermediate}, hence 
$\phi_{(X,Y)}(t,v) = \phi_X(t) \phi_{Y}(v)$
holds on it.
Since $\phi_{X}(t)$ and $\phi_{Y}(v)$ are 
analytic, so is their product. By analytic 
continuation~\eqref{eq:phifactorizes}
holds, so again $X \indep Y$. 
\end{proof}

\vskip5mm

We now consider pairwise differences of both variables 
$X$ and $Y$. This requires a second lemma.

\begin{lemma} \label{lemma:XminX'indepYminY'}
If $(X,Y)$ is a pair of random variables and we 
construct an independent copy $(X',Y')$ of it, 
that is, $(X',Y') \sim (X,Y)$ and 
$(X',Y') \indep (X,Y)$, then 
$(X-X') \indep (Y-Y')$ is equivalent to the condition 
\begin{equation} \label{eq:cond2}
  |\phi_{(X,Y)}(t,v)| = |\phi_{(X,Y)}(t,0)|\; 
  |\phi_{(X,Y)}(0,v)| \quad \mbox{ for all } (t,v)\;.
\end{equation}
\end{lemma}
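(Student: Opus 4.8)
The plan is to reduce the claim to a one-line characteristic-function computation, invoking only the standard fact (a consequence of the uniqueness theorem for characteristic functions) that two real random variables $U$ and $V$ satisfy $U \indep V$ if and only if $\phi_{(U,V)}(t,v) = \phi_U(t)\,\phi_V(v)$ for all $(t,v)$. The argument parallels the proof of Lemma~\ref{lemma:XminX'indepY}, but is cleaner because now \emph{both} coordinates are symmetrized.

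First I would identify the joint characteristic function of the pair $(X-X',\,Y-Y')$. Since $(X',Y') \indep (X,Y)$ and $(X',Y') \sim (X,Y)$, and since $\E[e^{-itX'-ivY'}] = \overline{\E[e^{itX'+ivY'}]} = \overline{\phi_{(X,Y)}(t,v)}$, for every $(t,v)$ we get
\[
  \phi_{(X-X',\,Y-Y')}(t,v) = \E[e^{it(X-X')+iv(Y-Y')}]
  = \phi_{(X,Y)}(t,v)\,\overline{\phi_{(X,Y)}(t,v)} = \big|\phi_{(X,Y)}(t,v)\big|^2 .
\]
Setting $v=0$, and then $t=0$, yields the marginals $\phi_{X-X'}(t) = \big|\phi_{(X,Y)}(t,0)\big|^2$ and $\phi_{Y-Y'}(v) = \big|\phi_{(X,Y)}(0,v)\big|^2$.

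Next I would apply the factorization criterion. By the displayed identities, $(X-X') \indep (Y-Y')$ holds exactly when
\[
  \big|\phi_{(X,Y)}(t,v)\big|^2 = \big|\phi_{(X,Y)}(t,0)\big|^2\,\big|\phi_{(X,Y)}(0,v)\big|^2
  \qquad \text{for all } (t,v),
\]
and since all three factors are nonnegative reals, taking square roots shows this is equivalent to condition~\eqref{eq:cond2}. Both implications come out at once, so no separate treatment of $\Rightarrow$ and $\Leftarrow$ is needed.

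I do not expect a genuine obstacle here: the whole proof rests on the observation that $|\phi_{(X,Y)}|^2$ is the characteristic function of the symmetrized pair. The only points that need a little care are the appeal to the uniqueness theorem (so that factorization is \emph{equivalent} to independence, not merely necessary) and the bookkeeping between $|\cdot|$ and $|\cdot|^2$. Note that, unlike the partial converses in Proposition~\ref{prop:XminX'indepYminY'}, no hypothesis on the roots of $\phi_{(X,Y)}$ or on analyticity enters the lemma itself.
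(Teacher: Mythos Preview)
Your proof is correct and uses the same key identity as the paper, namely $\phi_{(X-X',\,Y-Y')}(t,v)=|\phi_{(X,Y)}(t,v)|^2$, together with the factorization criterion for independence. The only difference is organizational: the paper treats the two implications separately (and for $\Rightarrow$ detours through the four-variable characteristic function $\phi_{(X,-X',Y,-Y')}$), whereas you compute the joint and marginal characteristic functions once and read off the equivalence directly, which is tidier.
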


\begin{proof}[Proof of 
Lemma~\ref{lemma:XminX'indepYminY'}] For the direction 
$\Rightarrow$ we compute the characteristic functions
\begin{align*}
  &\phi_{(X,-X',Y,-Y')}(s,t,u,v) 
     = \phi_{(X,Y)}(s,u) \phi_{(-X',-Y')}(t,v)
     = \phi_{(X,Y)}(s,u) \phi_{(X,Y)}(-t,-v)\\  
  &\phi_{(X-X',Y-Y')}(t,v) = \phi_{X-X'}(t) \phi_{Y-Y'}(v)
     = \phi_X(t) \phi_{-X'}(t) \phi_Y(v) \phi_{-Y'}(v)\;.
\end{align*}
On the subset $\{s=t, u=v\}$ both left hand sides equal
$\E[e^{it(X-X')+iv(Y-Y')}]$.
Therefore
\begin{equation*} \label{eq:equalmod}
 |\phi_{(X,Y)}(t,v)|^2 =
 \phi_{(X,Y)}(t,v) \phi_{(X,Y)}(-t,-v) = 
 \phi_{X-X'}(t) \phi_{Y-Y'}(v)
  = |\phi_X(t)|^2 |\phi_Y(v)|^2\;.
\end{equation*}

For the direction $\Leftarrow$ we compute
\begin{align*}
  \phi_{(X-X',Y-Y')}(t,v) 
  &= \E[e^{itX}e^{-itX'}e^{ivY}e^{-ivY'}]\\
  &= \E[e^{itX+ivY}]\; \E[e^{-itX'-ivY'}] 
    \quad \mbox{due to } (X',Y') \indep (X,Y)\\
  &= \phi_{(X,Y)}(t,v) \overline{\phi_{(X,Y)}(t,v)}
   = |\phi_{(X,Y)}(t,v)|^2\\
  &= |\phi_X(t)|^2\;|\phi_Y(v)|^2 \quad 
     \mbox{from~\eqref{eq:cond2}}\\
  &= \phi_{X-X'}(t) \; \phi_{Y-Y'}(v)
\end{align*}
hence $(X-X') \indep (Y-Y')$.
\end{proof}

\vskip5mm

\begin{proof}[\bf Proof of Proposition 
\ref{prop:XminX'indepYminY'}] For (a) we use the
fact that $X \indep Y$ implies that\linebreak
$\phi_{(X,Y)}(t,v)= \phi_{X}(t)\phi_{Y}(v)$ 
for any $t$ and $v$, hence 
$|\phi_{(X,Y)}(t,v)| = |\phi_X(t)|\; |\phi_{Y}(v)|$
which is condition~\eqref{eq:cond2} in
Lemma~\ref{lemma:XminX'indepYminY'}, so 
$(X-X') \indep (Y-Y')$.

For (b), $(X-X') \indep Y$ implies
   $\phi_{(X,Y)}(t,v)= \phi_{X}(t)\phi_{Y}(v)$
for all $t$ with $\phi_X(t) \neq 0$ by 
Lemma~\ref{lemma:XminX'indepY}. In the remaining
points $(t,v)$ it holds that $\phi_X(t) = 0$
and then 
$|\phi_{(X,Y)}(t,v)|= |\phi_{X}(t)|\;|\phi_{Y}(v)|
= 0$ by condition~\eqref{eq:cond2} of
Lemma~\ref{lemma:XminX'indepYminY'}, so 
$\phi_{(X,Y)}(t,v) = 0 = 
\phi_{X}(t)\phi_{Y}(v)$ as well. The combination
yields~\eqref{eq:phifactorizes}, hence $X \indep Y$.

Part (c).
By symmetry of $(X,Y)$ and hence of $X$ and $Y$
we know that $\phi_{(X,Y)}(t,v)$ as well as
$\phi_X(t)$ and $\phi_Y(v)$ are real and even, hence
condition~\eqref{eq:cond2} yields
\begin{equation} \label{eq:equalsquares}
  \phi_{(X,Y)}^2(t,v) = 
  \phi_{X}^2(t) \phi_{Y}^2(v)\;.
\end{equation}

If $\phi_{(X,Y)}(t,v)$ has no roots, it follows from
$\phi_{(X,Y)}(0,0)=1$ and continuity of $\phi_{(X,Y)}$
that always $\phi_{(X,Y)}(t,v) > 0$. Therefore also
$\phi_{X}(t) = \phi_{(X,Y)}(t,0) > 0$ and 
$\phi_{Y}(v) = \phi_{(X,Y)}(0,v) > 0$.
Taking square roots on both sides 
of~\eqref{eq:equalsquares} 
yields~\eqref{eq:phifactorizes}, hence $X \indep Y$.

If, on the other hand, $\phi_{(X,Y)}$ is analytic, 
so are $\phi_{X}(t) = \phi_{(X,Y)}(t,0)$ and 
$\phi_{Y}(v) = \phi_{(X,Y)}(0,v)$.
All characteristic functions take the value 1 at 
the origin, and are absolutely continuous.
Therefore there is a $\delta > 0$ such that for all 
$(t,v)$ in the disk $B((0,0),\delta)$ it holds that 
$\phi_{(X,Y)}(t,v)$ as well as $\phi_{X}(t)$ and 
$\phi_{Y}(v)$ are strictly positive. 
On that disk we can thus take
square roots of~\eqref{eq:equalsquares}, yielding 
$\phi_{(X,Y)}(t,v) = \phi_X(t) \phi_{Y}(v)$ on it. 
Since $\phi_{X}(t)$ and $\phi_{Y}(v)$ are analytic, 
so is their product. By analytic continuation the 
equality must hold everywhere, 
yielding~\eqref{eq:phifactorizes} so again $X \indep Y$. 
\end{proof}


\begin{thebibliography}{}

\bibitem[\protect\citeauthoryear{Berezin}{Berezin}{2016}]{Berezin2016}
Berezin, S.~V. (2016).
\newblock On analytic characteristic functions and processes governed by {SDE}s.
\newblock {\em Physics and Mathematics\/}~{\em 2}, 144--149.

\bibitem[\protect\citeauthoryear{Berrisch}{Berrisch}{2023}]{Berrisch2023}
Berrisch, J. (2023).
\newblock {\texttt{dccpp:} Fast Computation of Distance Correlations}.
\newblock R package version 0.1.0, CRAN.

\bibitem[\protect\citeauthoryear{Chaudhuri and Hu}{Chaudhuri and Hu}{2019}]{Chaudhuri2019}
Chaudhuri, A. and W.~Hu (2019).
\newblock A fast algorithm for computing distance correlation.
\newblock {\em Computational Statistics and Data Analysis\/}~{\em 135}, 15--24.

\bibitem[\protect\citeauthoryear{Chen, Chen, and Wang}{Chen et~al.}{2018}]{Chen2018}
Chen, X., X.~Chen, and H.~Wang (2018).
\newblock Robust feature screening for ultra-high dimensional right censored data via distance correlation.
\newblock {\em Computational Statistics \& Data Analysis\/}~{\em 119}, 118--138.

\bibitem[\protect\citeauthoryear{Davis, Matsui, Mikosch, and Wan}{Davis et~al.}{2018}]{Davis2018}
Davis, R.~A., M.~Matsui, T.~Mikosch, and P.~Wan (2018).
\newblock Applications of distance correlation to time series.
\newblock {\em Bernoulli\/}~{\em 24}, 3087--3116.

\bibitem[\protect\citeauthoryear{Ding}{Ding}{2016}]{Ding2016}
Ding, P. (2016).
\newblock On the conditional distribution of the multivariate t distribution.
\newblock {\em The American Statistician\/}~{\em 70}, 293--295.

\bibitem[\protect\citeauthoryear{Edelmann and Goeman}{Edelmann and Goeman}{2022}]{Edelmann2022}
Edelmann, D. and J.~Goeman (2022).
\newblock {A Regression Perspective on Generalized Distance Covariance and the Hilbert–Schmidt Independence Criterion}.
\newblock {\em Statistical Science\/}~{\em 37}, 562--579.

\bibitem[\protect\citeauthoryear{Genz, Bretz, Miwa, Mi, Leish, Scheipl, Bornkamp, Maechler, and Hothorn}{Genz et~al.}{2023}]{Genz2023}
Genz, A., F.~Bretz, T.~Miwa, X.~Mi, F.~Leish, F.~Scheipl, B.~Bornkamp, M.~Maechler, and T.~Hothorn (2023).
\newblock {\texttt{mvtnorm:} Multivariate Normal and t Distributions}.
\newblock R package version 1.2-4, CRAN.

\bibitem[\protect\citeauthoryear{Huo and Sz\'ekely}{Huo and Sz\'ekely}{2016}]{Huo2016}
Huo, X. and G.~J. Sz\'ekely (2016).
\newblock {Fast Computing for Distance Covariance}.
\newblock {\em Technometrics\/}~{\em 58}, 435--447.

\bibitem[\protect\citeauthoryear{Leyder, Raymaekers, and Rousseeuw}{Leyder et~al.}{2024}]{Leyder2024}
Leyder, S., J.~Raymaekers, and P.~J. Rousseeuw (2024).
\newblock {Is Distance Correlation Robust? ArXiv preprint arXiv:2403.03722}.

\bibitem[\protect\citeauthoryear{Matteson and Tsay}{Matteson and Tsay}{2017}]{Matteson2017}
Matteson, D.~S. and R.~S. Tsay (2017).
\newblock Independent component analysis via distance covariance.
\newblock {\em Journal of the American Statistical Association\/}~{\em 112}, 623--637.

\bibitem[\protect\citeauthoryear{Mukhopadhyay}{Mukhopadhyay}{2022}]{Muk2022}
Mukhopadhyay, N. (2022).
\newblock {Pairwise Independence May Not Imply Independence: New Illustrations and a Generalization}.
\newblock {\em The American Statistician\/}~{\em 76}, 184--187.

\bibitem[\protect\citeauthoryear{Rizzo and Sz{\'e}kely}{Rizzo and Sz{\'e}kely}{2022}]{Rizzo2022}
Rizzo, M. and G.~J. Sz{\'e}kely (2022).
\newblock {\texttt{energy:} Multivariate Inference via the Energy of data}.
\newblock R package version 1.7-11, CRAN.

\bibitem[\protect\citeauthoryear{Rodgers and Nicewander}{Rodgers and Nicewander}{1988}]{Rodgers1988}
Rodgers, J.~L. and W.~A. Nicewander (1988).
\newblock {Thirteen Ways to Look at the Correlation Coefficient}.
\newblock {\em The American Statistician\/}~{\em 42}, 59--66.

\bibitem[\protect\citeauthoryear{Rousseeuw and Molenberghs}{Rousseeuw and Molenberghs}{1994}]{Rousseeuw1994}
Rousseeuw, P.~J. and G.~Molenberghs (1994).
\newblock {The Shape of Correlation Matrices}.
\newblock {\em The American Statistician\/}~{\em 48}, 276--279.

\bibitem[\protect\citeauthoryear{Sz{\'e}kely and Rizzo}{Sz{\'e}kely and Rizzo}{2023}]{SzekelyRizzo2023}
Sz{\'e}kely, G.~J. and M.~L. Rizzo (2023).
\newblock {\em The Energy of Data and Distance Correlation}.
\newblock Chapman \& Hall.

\bibitem[\protect\citeauthoryear{Sz{\'e}kely, Rizzo, and Bakirov}{Sz{\'e}kely et~al.}{2007}]{Szekely2007}
Sz{\'e}kely, G.~J., M.~L. Rizzo, and N.~K. Bakirov (2007).
\newblock Measuring and testing dependence by correlation of distances.
\newblock {\em The Annals of Statistics\/}~{\em 35}, 2769--2794.

\bibitem[\protect\citeauthoryear{Ushakov}{Ushakov}{1999}]{Ushakov1999}
Ushakov, N. (1999).
\newblock {\em Selected Topics in Characteristic Functions}.
\newblock VSP Publishers, Leiden, The Netherlands.

\bibitem[\protect\citeauthoryear{Zhang}{Zhang}{2019}]{Zhang2019}
Zhang, Q. (2019).
\newblock Independence test for large sparse contingency tables based on distance correlation.
\newblock {\em Statistics \& Probability Letters\/}~{\em 148}, 17--22.

\end{thebibliography}

\end{document}